\spnewtheorem{observation}{Observation}{\bfseries}{\itshape}
\crefname{property}{Property}{Properties}
\crefname{property2}{Property}{Properties}
\crefname{theorem2}{Theorem}{Theorems}
\crefname{lemma2}{Lemma}{Lemmas}
\crefname{observation}{Observation}{Observations}
\crefname{observation2}{Observation}{Observations}
\renewcommand{\orcidID}[1]{\href{https://orcid.org/#1}{\includegraphics[scale=.03]{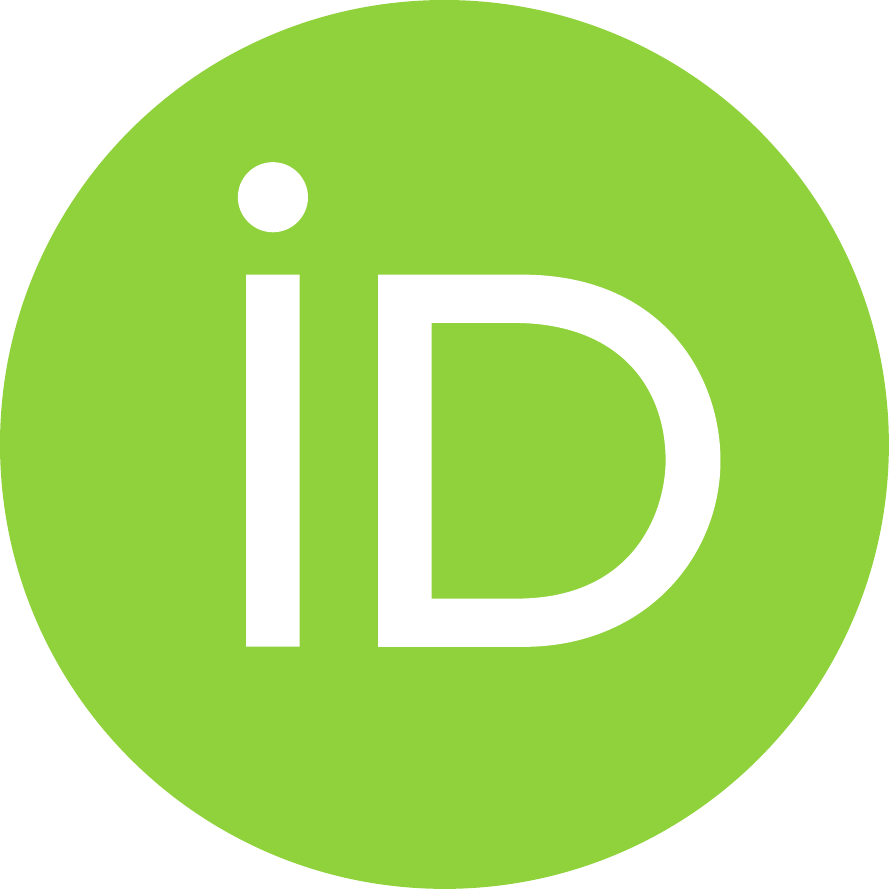}}}
\renewcommand{\emph}[1]{{\em \textcolor{blue}{#1}}\xspace}
\newcommand{\myparagraph}[1]{\medskip\noindent\textbf{#1}}
\newcommand{\fabristar}{$\star$}
\title{Strictly-Convex Drawings\\of $3$-Connected Planar Graphs\thanks{Research of FM partially supported by Dip. Ingegneria, Univ. of Perugia, grant RICBA21LG ``Algoritmi, modelli e sistemi per la rappresentazione visuale di reti''.}}
\titlerunning{Strictly-Convex Drawings of $3$-Connected Planar Graphs}
\author{Michael~A.~Bekos\inst{1}\orcidID{0000-0002-3414-7444} \and
Martin~Gronemann\inst{2}\orcidID{0000-0003-2565-090X} \and \\
Fabrizio Montecchiani\inst{3}\orcidID{0000-0002-0543-8912} \and Antonios~Symvonis\inst{4}\orcidID{0000-0002-0280-741X}}
\authorrunning{M.~A.~Bekos, M.~Gronemann, F.~Montecchiani, A.~Symvonis}
 \institute{Department of Mathematics, University of Ioannina, Ioannina, Greece\\
 \email{bekos@uoi.gr}
 \and  Algorithms and Complexity Group, TU Wien, Vienna, Austria\\
 \email{mgronemann@ac.tuwien.ac.at}
 \and Department of Engineering, University of Perugia, Italy\\
 \email{fabrizio.montecchiani@unipg.it}
 \and School of Applied Mathematical \& Physical Sciences, NTUA, Greece\\
 \email{symvonis@math.ntua.gr}}
\begin{document}
\maketitle 

\begin{abstract}
Strictly-convex straight-line drawings of $3$-connected planar graphs in small area form a classical research topic in Graph Drawing. Currently, the best-known area bound for such drawings is $O(n^2) \times O(n^2)$, as shown by B\'{a}r\'{a}ny and Rote by means of a sophisticated technique based on perturbing (non-strictly) convex drawings. Unfortunately, the hidden constants in such area bound are in the $10^4$ order. 

We present a new and easy-to-implement technique that yields strictly-convex straight-line planar drawings of $3$-connected planar graphs on an integer grid of size $2(n-1) \times (5n^3-4n^2)$.
\begin{keywords}
Strictly-Convex Drawings \and Area Bounds \and Planar Graphs
\end{keywords}
\end{abstract}

\section{Introduction}

Drawing planar graphs is a fundamental topic in Graph Drawing with several important contributions over the last few decades~\cite{DBLP:journals/combinatorica/FraysseixPP90,Far48,DBLP:journals/algorithmica/Kant96,DBLP:conf/soda/Schnyder90,Tu63}. One of the most influential is due to de~Fraysseix, Pach and Pollack~\cite{DBLP:journals/combinatorica/FraysseixPP90}, who back in 1988 showed that every $n$-vertex planar graph admits a straight-line planar drawing on a $(2n-4)\times(n-2)$ grid, which can be computed in $O(n)$ time~\cite{DBLP:journals/ipl/ChrobakP95}. Since then, several improvements on the size of the underlying grid have been proposed in the literature~\cite{DBLP:journals/order/Felsner01,DBLP:journals/dcg/He97,DBLP:journals/algorithmica/Kant96,DBLP:journals/dcg/MiuraNN01,DBLP:conf/soda/Schnyder90,DBLP:conf/wads/ZhangH03}. The best-known upper bound is $(n-2)\times(n-2)$ by Chrobak and Kant~\cite{DBLP:journals/ijcga/ChrobakK97} and by Schnyder~\cite{DBLP:conf/soda/Schnyder90}, who propose two conceptually different approaches to derive this bound. The former is an incremental drawing algorithm inspired by~\cite{DBLP:journals/combinatorica/FraysseixPP90}, while the latter is based on a face counting technique.

Straight-line drawings of planar graphs have also been extensively studied by requiring convexity~\cite{DBLP:journals/SchnyderT92}, that is, the boundary of every face must be a convex polygon. Such drawings are called \emph{convex} and always exist for $3$-connected planar graphs~\cite{DBLP:journals/jct/Thomassen84,Tu63}. Again the aim is to keep the size of the underlying grid as small as possible; see~\cite{Battista2013} for a survey. Early results date back to Schnyder and Trotter~\cite{DBLP:journals/SchnyderT92}, Chrobak and Kant~\cite{DBLP:journals/ijcga/ChrobakK97}, Di Battista et al.~\cite{DBLP:journals/algorithmica/BattistaTV99} and Felsner~\cite{DBLP:journals/order/Felsner01}. The latter guarantees the existence of a convex drawing of a $3$-connected planar graph~on a $(f-1)\times(f-1)$ integer grid, where $f=O(n)$ is the number of faces.

Note that in a convex drawing three vertices on the boundary of a face can be collinear. If this is not allowed, then the corresponding drawings are called \emph{strictly-convex}. Since an $n$-vertex cycle cannot be drawn strictly-convex on a grid of size $o(n^3)$~\cite{DBLP:conf/compgeom/ChrobakGT96}, it follows that strictly-convex drawings are more demanding in terms of required area. 
As an adaptation of the standard incremental drawing algorithms or the face-counting methods is rather difficult, the only approach that has been exploited so far to obtain strictly-convex drawings is to perturb convex drawings. This idea was pioneered by Chrobak, Goodrich and Tamassia~\cite{DBLP:conf/compgeom/ChrobakGT96}, who claimed (without giving details) that every $3$-connected planar graph admits a strictly-convex drawing on an $O(n^3) \times O(n^3)$ grid. The area bound was improved
to $O(n^{7/3}) \times O(n^{7/3})$ by Rote~\cite{DBLP:conf/soda/Rote05} and to $O(n^2) \times O(n^2)$ by B\'{a}r\'{a}ny and Rote~\cite{Barany2006}, which is currently the best-known asymptotic upper bound.
However, as the authors mention ``the constants hidden in the $O$-notation are on the order of 100 for the width and on the order of 10,000 for the height. This is far too much for applications where one wants to draw graphs on a computer screen''~\cite{Barany2006}.

\begin{figure}[t]
    \centering
    \includegraphics{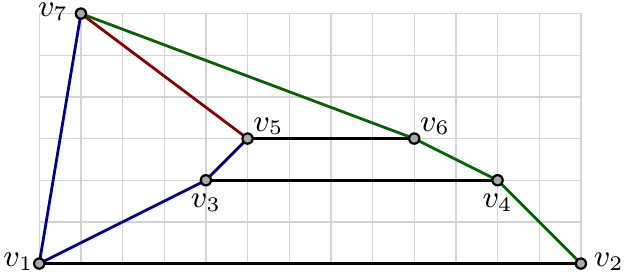}
    \caption{A strictly-convex drawing of a 3-connected planar graph on $7$ vertices.}
    \label{fig:sample}
\end{figure}

\myparagraph{Our contribution.} We continue the research on strictly-convex drawings of $3$-connected planar graphs. 
Our contribution is a new technique that computes strictly-convex drawings of $3$-connected $n$-vertex planar graphs on an integer grid of size $2(n-1) \times (5n^3-4n^2)$, as outlined in the following theorem. Although the asymptotic area bound is the same as the one in~\cite{Barany2006}, the multiplicative constants are significantly smaller. Also, the proposed technique is elegant and can be readily implemented to run in linear time. On the other hand, the aspect-ratio of the produced drawings is quadratic rather than constant.

\begin{theorem}\label{thm:main}
Every $3$-connected planar graph with $n$ vertices admits a strictly-convex planar straight-line  drawing on an integer grid of size $2(n-1) \times (5n^3-4n^2)$. Also, the drawing can be computed in $O(n)$ time.
\end{theorem}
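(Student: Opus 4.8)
The plan is to adapt the incremental drawing paradigm of de~Fraysseix, Pach and Pollack~\cite{DBLP:journals/combinatorica/FraysseixPP90}, in its generalization to $3$-connected planar graphs due to Kant~\cite{DBLP:journals/algorithmica/Kant96}, rather than to perturb a convex drawing as in all previous work. First I would fix a \emph{canonical ordering} of the input graph, partitioning the vertices into groups $V_1,\dots,V_m$ that are added one at a time; each group is either a single vertex or a chain whose internal vertices have no neighbor among the not-yet-placed vertices. Adding a group keeps the outer boundary of the drawn subgraph an $x$-monotone curve, and the whole graph is obtained by stacking the groups upward. The overall strategy is to place the groups on an integer grid so that, after each insertion, (i) the current outer contour is a \emph{strictly-convex} $x$-monotone chain and (ii) every already-completed internal face is a strictly-convex polygon; the final drawing is then strictly convex because its outer face is bounded by two strictly-convex chains meeting at the first and last vertices.

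For the horizontal coordinates I would use the standard shifting method, assigning distinct integer $x$-coordinates so that inserting a group of $\ell$ vertices widens the drawing by a controlled amount; a careful accounting, as in the linear-width variants of the canonical-ordering algorithms, keeps the total width at $2(n-1)$. The difficulty is entirely vertical: to force strictness I would choose, for the two edges joining a newly inserted group to the endpoints of the contour path it covers, slopes that are pairwise distinct from (and sufficiently separated from) all slopes already appearing on the relevant faces. Distinctness of the slopes along a convex chain is precisely what upgrades convexity to strict convexity, so the core of the construction is a slope-assignment scheme together with the heights it induces.

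The quantitative heart of the proof is then to bound the accumulated height. Because the grid has width $\Theta(n)$, realizing $\Theta(n)$ pairwise-distinct, convexly-ordered slopes on a single contour already costs height $\Theta(n^2)$ (consistent with the $\Omega(n^3)$-area lower bound for a strictly-convex $n$-cycle~\cite{DBLP:conf/compgeom/ChrobakGT96}); since the contour is rebuilt across $\Theta(n)$ insertions, the per-step increments accumulate to $\Theta(n^3)$. I would make each step's height increment explicit as a function of the group's position and the number of vertices it covers, and then sum; the worst case is a sequence of single-vertex groups, and summing the per-step increments yields the stated bound $5n^3-4n^2$.

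I expect the main obstacle to be item~(ii): guaranteeing that the \emph{interior} faces stay strictly convex. When a group is inserted, the shifting step pulls apart the portion of the contour it covers, and the vertices interior to that portion each acquire the new edge to the inserted group; one must certify that none of the resulting corners on the faces below become collinear and that the shift preserves the strictness established in earlier steps. Handling non-triangular faces (which do occur for general $3$-connected planar graphs, unlike in the triangulated de~Fraysseix--Pach--Pollack setting) is the delicate point, since a single face may accumulate contributions from several insertion steps and its strictness must be argued globally rather than locally. Once the slope-separation invariant is shown to survive each shift, strict convexity of every face follows, and the $O(n)$ running time is inherited from the canonical-ordering computation and the incremental coordinate updates.
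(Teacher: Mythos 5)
Your proposal takes a genuinely different route from the paper---a direct incremental construction that tries to maintain strict convexity throughout the canonical-order insertion---but it contains a genuine gap, and indeed the paper explicitly motivates its own method by noting that an adaptation of the standard incremental algorithms is ``rather difficult.'' The first problem is your invariant~(i). When a new path covers a portion of the contour, that portion becomes the \emph{lower} boundary of the newly completed inner face, so it must be strictly convex from below (slopes strictly increasing left to right); but the portion of the contour that survives to the end bounds the outer face from above, so it must be strictly convex from above (slopes strictly decreasing). A single ``strictly-convex $x$-monotone contour'' cannot meet both requirements: if the contour is globally cap-shaped, every face completed above it is non-convex along the covered portion; if it is globally cup-shaped, the final outer face fails. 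The invariant would have to distinguish which contour portions are destined to be covered and which lie on the outer face, and nothing in your plan does this. The second, more serious problem is the one you yourself flag and then assume away: that the slope-separation invariant ``survives each shift.'' In Kant's algorithm the shift amounts are dictated by planarity (via the shift-sets), not by slope separation; a shift decreases the slopes of some contour edges and increases others, edges buried inside long-completed faces keep moving relative to one another, and a single face accumulates edges over many insertion steps, so a late shift can create a collinearity in a face completed much earlier. Certifying that this never happens is precisely the obstruction that has kept this approach from working, and the proposal offers no mechanism for it. Finally, the constants are asserted rather than derived: no step of the plan actually produces width $2(n-1)$ or height $5n^3-4n^2$.

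The paper sidesteps all of this. It runs Kant's algorithm as a black box, accepting its collinearities, and then applies one global non-affine transformation $(x,y)\mapsto(x,f(y))$ with $f$ strictly increasing and strictly convex, concretely $f(y)=5(n-2)^2y+y^2$. The key lemma is that such a lifting preserves planarity for any drawing in which no vertex lies inside the axis-aligned right triangle spanned by an edge---a property Kant's drawings are shown to satisfy---and strict convexity of $f$ turns every non-horizontal collinear triple on a face into a strictly convex corner. The only surviving collinearities are horizontal chains, which are repaired by a second, purely local parabolic shift of each chain's internal vertices; the outer face is handled by augmenting the graph with at most two dummy vertices before drawing and deleting them afterwards. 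The height bound $5n^3-4n^2$ then falls out of evaluating $f$ at the height $n-2$ of Kant's grid (after augmentation), not from any per-step accounting. In short, your plan identifies the right difficulties but leaves both of them open, whereas the paper's lifting technique is exactly the device invented so that they never have to be confronted.
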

\noindent Our technique starts with a convex drawing computed by Kant's algorithm~\cite{DBLP:journals/algorithmica/Kant96}. We rely on properties of such a drawing to show that shifting vertices upwards  by using a  strictly-increasing and strictly-convex function preserves planarity; a property of independent interest. Also, the obtained planar drawing is convex and collinear vertices in a face, if any, are horizontally aligned. For such vertices, a second shifting yields an internally strictly-convex drawing. A suitable augmentation guarantees that the outer face is also strictly-convex. 

\myparagraph{Paper structure.} \Cref{sec:preliminaries} contains basic definitions and tools. In \Cref{sec:kant}, we introduce properties of Kant's algorithm that we leverage in our technique. \Cref{sec:main} describes our algorithm. \Cref{sec:conclusions} concludes the paper with a brief discussion and open problems. For space reasons, some proofs are in the appendix and the corresponding statements are marked with ($\star$).

\section{Preliminaries}\label{sec:preliminaries}
\begin{figure}[t]
    \centering
    \includegraphics[page=1]{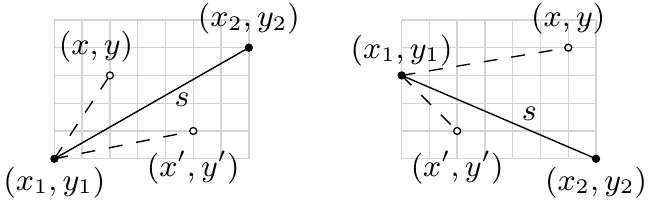}
	\caption{In both drawings, $(x,y)$ is above $s$, while $(x',y')$ is below $s$. }
	\label{fig:above-below}
\end{figure}

\textbf{Basic definitions.} Let $f: \mathbb{R} \rightarrow \mathbb{R}$ be a function. If $f(a) < f(b)$ for every pair $a,b \in \mathbb{R}$ with $a < b$, then $f$ is \emph{strictly-increasing}. Function $f$ is \emph{strictly-convex} if for all $t$, $0< t<1$, and all $a,b \in \mathbb{R}$, it holds $f(ta+(1-t)b) < t f(a)+(1-t)f(b)$. Consider three points $(x_1,y_1), (x,y), (x_2,y_2)$, with $x_1 < x < x_2$ and let $s$ be the line-segment connecting $(x_1,y_1)$ and $(x_2,y_2)$. We say that $(x,y)$ is \emph{above} (resp., \emph{below}) $s$ if the slope of $s$ is smaller (resp., larger) than the slope of the line-segment connecting $(x_1,y_1)$ and $(x,y)$; see \Cref{fig:above-below}. The next lemma easily follows from the chordal slope lemma~\cite{martinbook} (see also \cref{lem:chordal-slope} in the appendix).

\begin{toappendix}
This section is devoted to proving \cref{le:basic}. To this end, the following lemma, known as \emph{chordal slope lemma}, is needed. In the statement, the notation $\text{slope}(p,q)$ denotes the slope of the line-segment with endpoints~$p$~and~$q$.

\begin{lemma}[Chordal slope lemma~\cite{martinbook}]\label{lem:chordal-slope}
Let $f: (a,b) \rightarrow \mathbb{R}$ be a convex function. Consider  $x_1 < x < x_2$ in $(a,b)$, and the points $p_1=(x_1,f(x_1))$, $p=(x,f(x))$, and $p_2=(x_2,f(x_2))$. Then: 
\[
\text{slope}(p_1,p) \le \text{slope}(p_1,p_2) \le \text{slope}(p,p_2).
\]
\end{lemma}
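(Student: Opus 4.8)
The plan is to express each of the three slopes as a difference quotient of $f$ and to let the single defining inequality of convexity, applied at the interior abscissa $x$, generate both bounds. First I would note that since $x_1 < x < x_2$, the point $x$ is a strict convex combination of the endpoints: with $t := \frac{x_2-x}{x_2-x_1} \in (0,1)$ one has $x = t\,x_1 + (1-t)\,x_2$ and $1-t = \frac{x-x_1}{x_2-x_1}$. Convexity of $f$ then gives $f(x) \le t\,f(x_1) + (1-t)\,f(x_2)$, which after multiplying through by the positive number $x_2 - x_1$ becomes the single key inequality
\[
(x_2 - x_1)\,f(x) \;\le\; (x_2 - x)\,f(x_1) + (x - x_1)\,f(x_2).
\]
This one inequality drives both of the desired bounds.

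For the left-hand bound $\text{slope}(p_1,p) \le \text{slope}(p_1,p_2)$, I would subtract $(x_2-x_1)\,f(x_1)$ from both sides of the key inequality; the right-hand side collapses to $(x-x_1)\bigl(f(x_2)-f(x_1)\bigr)$, and dividing by the positive quantity $(x-x_1)(x_2-x_1)$ yields exactly $\frac{f(x)-f(x_1)}{x-x_1} \le \frac{f(x_2)-f(x_1)}{x_2-x_1}$. The right-hand bound $\text{slope}(p_1,p_2) \le \text{slope}(p,p_2)$ is obtained by the mirror-image manipulation, now isolating the combination $f(x_2)-f(x)$ instead of $f(x)-f(x_1)$ and dividing by the positive quantity $(x_2-x)(x_2-x_1)$; this gives $\frac{f(x_2)-f(x_1)}{x_2-x_1} \le \frac{f(x_2)-f(x)}{x_2-x}$. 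Together, the two chains are precisely the asserted slope comparison.

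There is no genuine obstacle here: the whole argument hinges on the one convexity inequality, and the two bounds are images of each other under swapping the roles of the two endpoints. The only points demanding care are bookkeeping ones, namely correctly identifying the weight $t$ so that $x = t\,x_1 + (1-t)\,x_2$, and dividing only by manifestly positive lengths so that the inequality directions are preserved. Since the statement concerns a merely convex (not strictly convex) $f$, the non-strict $\le$ supplied by the definition of convexity is exactly what is needed, and no additional hypothesis comes into play.
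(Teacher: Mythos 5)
Your proof is correct, but it is worth pointing out that the paper itself contains no proof of this statement: the chordal slope lemma is quoted verbatim as a known result from~\cite{martinbook} (the citation is part of the lemma's name), and the appendix only adds the remark that the strictly-convex version of it, with all inequalities strict, is what is actually invoked in the proof of \cref{le:basic}. Your argument supplies the standard elementary derivation that the paper delegates to the textbook: the single convexity inequality at the interior abscissa, which after clearing denominators reads
\[
(x_2-x_1)\,f(x)\;\le\;(x_2-x)\,f(x_1)+(x-x_1)\,f(x_2),
\]
yields the left bound by subtracting $(x_2-x_1)f(x_1)$ and dividing by $(x-x_1)(x_2-x_1)>0$, and the right bound by the symmetric rearrangement; both algebraic reductions check out, since each is equivalent to the displayed key inequality. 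What your route buys is self-containment—the lemma is proved from the bare definition of convexity, with no external dependency. One small addition would make it fit the paper's actual use: since \cref{le:basic} needs the version where $f$ is \emph{strictly} convex and all three slope inequalities are strict, you should note that for $t\in(0,1)$ strict convexity makes your key inequality strict, and strictness then propagates unchanged through both divisions, so your argument delivers the strict version with no extra work.
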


\noindent In our proof below, we use the version of this lemma in which the function $f$ is strictly-convex, which implies that all inequalities of the statement are strict.
\end{toappendix}

\begin{lemma2rep}[$\star$]\label{le:basic}
Let $(x_1,y_1)$, $(x,y)$ and $(x_2,y_2)$ be three collinear points with $x_1 < x < x_2$ that are not horizontally aligned. If $f:\mathbb{R} \rightarrow \mathbb{R}$ is a strictly-convex function, then point $(x, f(y))$ is below the line-segment with endpoints $(x_1,f(y_1))$ and $(x_2,f(y_2))$.
\end{lemma2rep}
\begin{proof}
Since $(x_1,y_1), (x,y), (x_2,y_2)$ are collinear, by Thales' Theorem, we know $\frac{y_2-y_1}{y-y_1}=\frac{x_2-x_1}{x-x_1}$. Suppose $y_1 < y < y_2$, as otherwise, if $y_1 > y > y_2$, the argument is symmetric. Then, the claim holds if $\frac{f(y_2)-f(y_1)}{x_2-x_1}>\frac{f(y)-f(y_1)}{x-x_1}$. We can rewrite this inequality as $\frac{f(y_2)-f(y_1)}{y_2-y_1}>\frac{f(y)-f(y_1)}{y-y_1}$, which in turn follows by the chordal slope lemma given above.
\end{proof}

\myparagraph{Drawings and embeddings.} We assume familiarity with basic graph drawing concepts~\cite{DBLP:books/ph/BattistaETT99}. In particular, a \emph{plane graph} is a graph with a prescribed planar embedding. Unless otherwise specified, we consider drawings that are straight-line, planar and whose vertices are on an integer grid. A drawing is \emph{convex} (\emph{strictly-convex}) if the boundary of each face is a convex (strictly-convex) polygon. Similarly, a drawing is \emph{internally convex} (\emph{internally strictly-convex}) if the boundary of each inner face is a convex (strictly-convex) polygon. 
Given a drawing $\Gamma$ of a graph $G$, denote by $(x_u,y_u)$ the coordinates of vertex $u$ in $\Gamma$. For two vertices $u$ and $v$ in $\Gamma$, we denote by $\Delta_{uv}$ the interior of the right triangle whose corners are $u$, $v$ and the intersection of the vertical line though the vertex having the lowest $y$-coordinate with the horizontal line through the vertex having the highest $y$-coordinate (among $u,v$). For example, in \cref{fig:contour-condition}, the $\Delta_{uv}$ triangle of the endpoints of each edge $(u,v)$~is~striped.

\myparagraph{Canonical order.}  Let $G$ be a $3$-connected plane graph with $n$ vertices. Let $\delta = (P_0,\ldots,P_m)$ be a partition of the vertices of $G$ into paths, such that $P_0 = \{v_1,v_2\}$, $P_m=\{v_n\}$, and edges $(v_1,v_2)$ and $(v_1,v_n)$ exist and belong to the outer face. For $k=0,\ldots,m$, let $G_k$ be the subgraph induced by $\cup_{i=0}^k P_i$. Let $C_k$ be the \emph{contour} of $G_k$ defined as follows: If $k=0$, then $C_0$ is the edge $(v_1,v_2)$, while if $k>0$, then $C_k$ is the path from $v_1$ to $v_2$ obtained by removing $(v_1,v_2)$ from the cycle delimiting the outer face of $G_k$. Partition $\delta$ is a \emph{canonical order}~\cite{DBLP:journals/algorithmica/Kant96}  of $G$ if for each $k=1,\ldots,m-1$ the following conditions hold: %
\begin{inparaenum}[(i)]
\item\label{co:1}$G_k$ is biconnected and internally $3$-connected,
\item\label{co:2}all neighbors of $P_k$ in $G_{k-1}$ are on $C_{k-1}$,
\item\label{co:3}either $P_k$ is a \emph{singleton} (i.e., $|P_k|=1$), or $P_k$ is a \emph{chain} (i.e., $|P_k|>1$) and the degree of each vertex of $P_k$ is $2$ in $G_k$,
\item\label{co:4}all vertices of $P_k$ with $0\leq k < m$ have at least one neighbor in $P_j$ for some $j > k$. For example, a canonical order for the graph of \cref{fig:sample} is $P_0=\{v_1,v_2\}$, $P_1=\{v_3,v_4\}$, $P_2=\{v_5,v_6\}$ and $P_3=\{v_7\}$.
\end{inparaenum}
A canonical order of $G$ can be computed in $O(n)$ time~\cite{DBLP:journals/algorithmica/Kant96}. 

\myparagraph{Kant's algorithm.} Kant~\cite{DBLP:journals/algorithmica/Kant96} describes an incremental drawing algorithm that, in linear time,  computes a convex straight-line planar drawing $\Gamma$ of an $n$-vertex plane graph $G$ on an integer grid of size $(2n-4) \times (n-2)$. The drawing $\Gamma$ has the same planar embedding as the input graph $G$. The algorithm is based on a canonical order $\delta$ of $G$ and works as follows: Initially, vertices $v_1$ and $v_2$ of $P_0$ are placed at points $(0,0)$ and $(1,0)$, respectively. For $k=1,\ldots,m$, assume that a  convex  drawing $\Gamma_{k-1}$ of $G_{k-1}$ has been constructed in which the edges of contour $C_{k-1}$ are drawn with slopes $0$ and $\pm 1$ (\emph{contour condition}; see \cref{fig:contour-condition}). 
Let $(w_1,\ldots,w_p)$ be the vertices of  $C_{k-1}$ from left to right in $\Gamma_{k-1}$, where $w_1=v_1$ and $w_p=v_2$. 
Each vertex $v$ in $G_{k-1}$ has been associated with a \emph{shift-set} $S(v)$, such that $\Gamma_{k-1}$ is \emph{stretchable}, that is, for each $i=1,\ldots,p$ the result of shifting $S(w_i),\ldots,S(w_p)$ by one (or more) units to the right is a convex drawing of $G_{k-1}$. 
Let $P_k = \{z_1,\ldots,z_p \}$ be the next path in $\delta$. Let $w_\ell$ and $w_r$ be the leftmost and rightmost neighbors of $P_k$ on $C_{k-1}$ in $\Gamma_{k-1}$, where $1 \leq  \ell < r \leq p$. 
To introduce $P_k$ and to avoid edge-overlaps, the algorithm first identifies two so-called critical vertices $w_{\ell'}$ and $w_{r'}$ with $\ell \leq \ell',r' \leq r$ and then shifts~(i)~by one unit to the right each vertex in $\bigcup_{i=\ell'}^{p} S(w_i)$ and then
(ii)~by one unit to the right each vertex in $\bigcup_{i=r'}^{p} S(w_i)$. Then, $z_1$ is placed at intersection of the line of slope $+1$ through $w_\ell$ with the line through of slope $-1$ point $w_r$; see \Cref{fig:placement}. If $P_k$ is~a chain, then for $i=2,\ldots,p$, vertex $z_i$ is placed one unit to the right of $z_{i-1}$ by shifting  each vertex in $\bigcup_{i=r'}^{p} S(w_i)$ one unit to the right. Finally, the shift-sets of the vertices of $P_k$ are defined accordingly to ensure that $\Gamma_k$ is stretchable. 

\begin{figure}[t!]
	\centering
	\begin{subfigure}{.48\textwidth}
	\flushleft
	\includegraphics[page=1,scale=0.85]{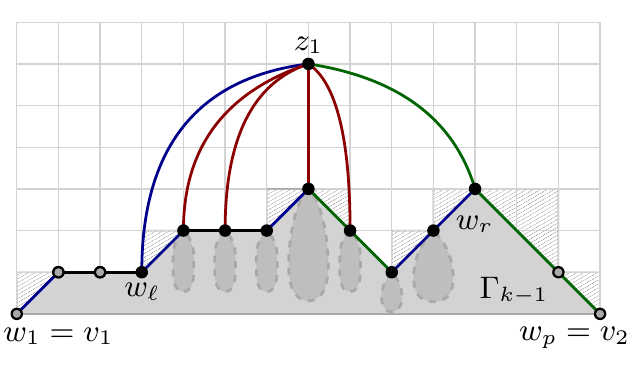}
	\subcaption{Contour condition}
	\label{fig:contour-condition}
	\end{subfigure}
	\hfil
	\begin{subfigure}{.48\textwidth}
	\flushright
	\includegraphics[page=2,scale=0.85]{figs/kant.pdf}
	\subcaption{Placement of $z_1$ in $\Gamma_{k-1}$}
	\label{fig:placement}
	\end{subfigure}
	\caption{%
	Introducing a singleton $P_k=\{z_1\}$ in $\Gamma_{k-1}$ in the algorithm by Kant~\cite{DBLP:journals/algorithmica/Kant96}.}
	\label{fig:kant}
\end{figure}

\section{Properties of Kant's algorithm}\label{sec:kant}

We provide properties of drawings computed by Kant's algorithm that we leverage in the next section; some of these properties are indirectly mentioned also in~\cite{DBLP:conf/wg/Kant92}. To ease the presentation, we first introduce a $4$-coloring for the edges of $G$ similar to the one by Schnyder~\cite{felsner,DBLP:conf/soda/Schnyder90}. We color edge $(v_1,v_2)$ of $G_0$ black. Given a $4$-coloring for $G_{k-1}$ with $k=1,\ldots,m$, we extend it for $G_k$ as follows (see \cref{fig:sample,fig:contour-condition}). We first color the edges of $G_{k}$ that do not belong to $G_{k-1}$ and are on contour $C_{k}$. Namely, the first such edge encountered in a clockwise walk of $C_k$ from $v_1$ to $v_2$ is blue, the last one is green and all remaining ones (that is, those having both endpoints in $P_k$ when $P_k$ is a chain) are black. The remaining edges of $G_k$ not in $G_{k-1}$ are red (i.e., those that are incident to $P_k$ and are not part of $C_k$; this case only arises if $P_k$ is a singleton by Condition~(\ref{co:3}) of the canonical order), which implies that $C_k$ has no red edges.

Since a shift to introduce a path of $\delta$ in the incremental construction of $\Gamma$ can only decrease the slope of a blue edge, increase the slope of a green edge, while the black and the red edges maintain their slope~\cite{DBLP:journals/algorithmica/Kant96}, we have that:
\begin{itemize}[--]
\item the slope of each blue edge ranges in $(0,1]$,
\item the slope of each black edge is $0$, 
\item the slope of each green edge ranges in $[-1,0)$, and
\item the slope of each red edge ranges in the complement of $[-1,1]$.
\end{itemize}

\noindent Since each inner face in $\Gamma$ is formed when a path of $\delta$ is introduced during the incremental construction, part of it belongs to the contour, while its remaining part is formed by the introduced path, which gives rise to the following property.

\begin{property2}\label{prp:face}
Let $x$ be the leftmost vertex of an inner face $g$ in $\Gamma$ (in case of more than one such vertices select the bottommost one). A counterclockwise walk of $g$ starting from $x$ consists of the following boundary parts (see \cref{fig:face}): 
\begin{enumerate}[i.]
\item\label{f:1}a (possibly empty) strictly descendant path of green edges, 
\item\label{f:2}at most one black edge,
\item\label{f:3}a (possibly empty) strictly ascendant path of blue edges,
\item\label{f:4}a green or red edge,
\item\label{f:5}a (possibly empty) horizontal path of black edges, and 
\item\label{f:6}a blue or red edge.
\end{enumerate}
\end{property2}

\begin{figure}[t]
    \centering
    \includegraphics[page=1]{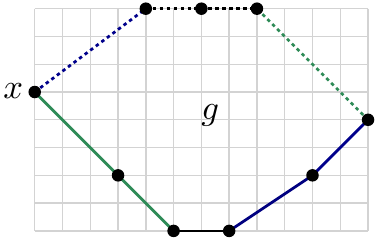}
	\caption{Illustration for the shape of a face.}
	\label{fig:face}
\end{figure}

\noindent Boundary parts~(\ref{f:4})--(\ref{f:6}) (dotted in \cref{fig:face}) are introduced when a path is added during the incremental construction of $\Gamma$ (which implies that at least one of boundary parts~(\ref{f:1})--(\ref{f:3}) is part of the contour and thus is non empty in $g$). So, boundary parts~(\ref{f:4})--(\ref{f:6}) cannot simultaneously contain black and red edges  (by the edge-coloring and Condition~(\ref{co:3}) of canonical order). 

\begin{property2rep}[\fabristar]\hspace{-0.5em}\label{prp:b} 
Each vertex $w$ of a path $P_i$, with $0< i \leq m$, has at~least~two incident edges $(a,w)$ and $(b,w)$, such that $y_a \le y_w$,  $y_b \le y_w$, and $x_a < x_w < x_b$.
\end{property2rep}
\begin{proof}
If $P_i$ is a singleton, then $a$ and $b$ are two extreme neighbors of $w$ in $C_{i-1}$. Otherwise, $P_i$ is a chain $(z_1,\ldots,z_p)$ attached to two vertices $w_\ell$ and $w_r$ along $C_{i-1}$. If $w$ is $z_1$ ($z_p$), then vertex $a$ is $w_\ell$ ($z_{p-1}$), while vertex $b$ is $z_2$ ($w_r$). Otherwise, $w$ is an internal vertex $z_j$ of $P_i$, in which case vertex $a$ is $z_{i-1}$, while vertex $b$ is $z_{i+1}$. By the contour condition, the claimed property holds when $P_i$ is introduced in $\Gamma_i$. Since shiftings performed to derive subsequent drawings do not affect the $y$-coordinate of the vertices in $P_i$ and do not decrease the horizontal distance of already placed vertices, the proof of the property follows. 
\end{proof}

\begin{property2rep}[\fabristar]\hspace{-0.5em}\label{prp:vertical-edges} 
Every face in $\Gamma$ has at most one edge drawn vertical.
\end{property2rep}
\begin{proof}
A shifting cannot introduce a vertically drawn edge in $\Gamma$. Hence, an edge $e$ drawn as a vertical segment is introduced in $\Gamma$ due to a path $P_i$ of $\delta$. Since edges drawn on the contour $C_i$ are of slopes $0$ and $\pm 1$, $P_i$ must be singleton, such that the edge $e$ is not on the contour $C_i$. Hence, $e$ bounds two faces whose remaining edges either are on $C_{i-1}$ or incident to the singleton $P_i$. Hence, none is drawn vertical by the contour condition.    
\end{proof}

\begin{property2rep}[\fabristar]\hspace{-0.5em}\label{prp:collinear} 
Let $u$, $v$ and $w$ be three consecutive vertices encountered in this order in a counterclockwise walk along the boundary of an inner face of $\Gamma$. If they are collinear and the~line through them has zero slope, then they are part of a chain. If they are collinear and the~line through them has positive (negative) slope, then $y_u < y_v < y_w$ ($y_u > y_v > y_w$).  If they are not collinear, then $v \notin \Delta_{uw}$.
\end{property2rep}
\begin{proof}
Consider three vertices $u$, $v$ and $w$ that appear consecutively along the boundary of $g$ (in a counterclockwise walk of $g$) and refer to Property~\ref{prp:face} for the definition of boundary parts.
It follows that if $u$, $v$ and $w$ are collinear and  horizontally aligned, then they belong to boundary part~(\ref{f:5}), i.e., they are part of a chain. 
Further, if $u$, $v$ and $w$ are collinear and not horizontally aligned, then they belong to boundary part~(\ref{f:1}) or to~(\ref{f:3}), which implies the claimed property. On the other hand, consider now the case where $u$, $v$ and $w$ are not collinear. It follows that if $v$ belongs to boundary parts~(\ref{f:1})--(\ref{f:3}), then by the fact that $g$ is convex, we can conclude that $v \notin \Delta_{uw}$. Hence, $v$ belongs to one of boundary parts~(\ref{f:4})--(\ref{f:6}), and since each of boundary parts~(\ref{f:4}) and (\ref{f:6}) consists of a single edge, we conclude that $v$ is an extreme vertex of boundary part~(\ref{f:5}). It this case, however, $v$ does not lie inside $\Delta_{uw}$ (but, actually, on it boundary). 
\end{proof}

\section{Algorithm Description}\label{sec:main}

We now describe our approach to compute a strictly-convex drawing of a $3$-connected plane graph, assuming that its outer face has at most $5$ vertices (such a face always exists). We start with \Cref{ssec:lifting}, in which we describe the properties of what we call lifting functions and liftable drawings. A key property is that applying a non-affine transformation to a liftable drawing by means of a lifting function preserves planarity. As this tool might be of independent interest, we state it as general as possible. In \Cref{ssec:application}, we prove that drawings computed by Kant's algorithm are indeed liftable and that a transformation via a liftable function makes them internally strictly-convex except for possible horizontally-aligned paths. Up to this point, it was not needed to choose a particular lifting function; in \Cref{ssec:all} we unveil our choice. We also design a second transformation targeted to faces containing paths of horizontally-aligned vertices. The output of this step is an  internally strictly-convex drawing. The last step of the algorithm is described in \Cref{ssec:outerface}, namely a simple preprocessing in which the outer face of the input graph, which by our assumption has at most $5$ vertices, is suitably augmented with dummy vertices, whose removal from the computed drawing guarantees that all faces (including the outer one) are strictly-convex.

\subsection{Lifting functions}\label{ssec:lifting}

Given a drawing $\Gamma$ of a graph $G$ and a function $f: \mathbb{R} \mapsto \mathbb{R}$, we refer to the~drawing $\Gamma_f$ obtained by applying the transformation $(x_u,y_u) \mapsto (x_u, f(y_u))$~to~$\Gamma$ as the \emph{transformed drawing} of $\Gamma$ with respect to $f$. In~view~of~\cref{thm:lifting-lemma} below, we focus on lifting functions and liftable drawings (see \cref{def:lifting,def:liftable}).

\begin{definition}\label{def:lifting}
A function $f : \mathbb{R} \mapsto \mathbb{R}$ is \emph{lifting} if and only if
\begin{inparaenum}[(i)]
    \item\label{lif:1}$f$ is strictly-convex and strictly-increasing;
    \item\label{lit:2}$f(r) \geq r, \forall r \in \mathbb{R}$;
    \item\label{lit:3}$r \in \mathbb{N} \Rightarrow f(r) \in \mathbb{N}$.
\end{inparaenum}
\end{definition}

\noindent The next property follows directly from \cref{def:lifting}.

\begin{observation}\label{obs:horizontal-vertical}
Let $\Gamma$ be a drawing of a graph $G$. 
Given a lifting function $f$, three vertices are horizontally (vertically) aligned in $\Gamma$ if and only if they are horizontally (vertically) aligned in the transformed drawing $\Gamma_f$.
\end{observation}

\begin{definition}\label{def:liftable}
A planar straight-line grid drawing of a graph is called \emph{liftable} if for every edge $(u,v)$ there is no vertex of $G$ in $\Delta_{uv}$.
\end{definition}

\begin{theorem}\label{thm:lifting-lemma}
Let $\Gamma$ be a planar straight-line grid drawing of a plane graph $G$.
If $\Gamma$ is liftable, then its transformed drawing $\Gamma_f$ with respect to a lifting function $f$ is a planar straight-line grid  drawing of $G$ with the same planar embedding as~$\Gamma$.
\end{theorem}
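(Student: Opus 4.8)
The goal is to show that the transformed drawing $\Gamma_f$ remains planar with the same embedding as $\Gamma$, given that $\Gamma$ is liftable and $f$ is a lifting function. The plan is to argue that no two edges cross in $\Gamma_f$ and that the rotation system (cyclic order of edges around each vertex) is preserved, which together guarantee planarity with the same embedding. Since the transformation $(x_u,y_u) \mapsto (x_u, f(y_u))$ is applied uniformly and $f$ is strictly-increasing, horizontal positions are untouched and vertical order is preserved; by \cref{obs:horizontal-vertical} the vertical and horizontal alignments are also preserved. The first step is therefore to observe that the combinatorial left-to-right and bottom-to-top orders of the vertices are unchanged, so the rotation system cannot flip: around each vertex, an incident edge going up-right still goes up-right, etc. This handles the ``same embedding'' part, modulo the planarity argument.

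The heart of the argument is ruling out edge crossings. First I would reduce the problem to pairs of edges that could potentially cross. Two edges that do not cross in $\Gamma$ must not cross in $\Gamma_f$; the natural case analysis is on whether the two edges share an endpoint or not, and on their relative vertical spans. The key geometric tool is \cref{le:basic}: if three collinear, non-horizontally-aligned points have their $y$-coordinates pushed through $f$, the middle point drops strictly below the segment joining the outer two. Applied to an edge $(u,v)$ and a third vertex $w$, this says the transformation bends edges downward (convexly), so a vertex that was strictly above an edge stays strictly above it and a vertex strictly below stays strictly below—provided we control where $w$ sits horizontally relative to $u$ and $v$. This is exactly where liftability enters: the condition that no vertex lies in the triangle $\Delta_{uv}$ prevents the problematic configuration in which a vertex is ``trapped'' in the vertical span of an edge on the wrong side, which is the only way the downward bending could create a crossing.

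Concretely, I would argue that for any edge $(u,v)$ of $G$ and any other vertex $w$, the side of the line through $u,v$ on which $w$ lies is preserved under the transformation. Assume w.l.o.g.\ $x_u < x_v$. If $w$ lies horizontally outside the span $[x_u,x_v]$, then by liftability no vertex is inside $\Delta_{uv}$, and one checks directly (using that $f$ is strictly-increasing and convex, together with the slope comparisons behind \cref{le:basic}) that the incidence is preserved. If $x_u < x_w < x_v$, then liftability guarantees $w \notin \Delta_{uv}$, which pins down that $w$ is either strictly above the edge or strictly below it but outside the triangle; in either case \cref{le:basic} (applied to the intersection point of the vertical line through $w$ with edge $(u,v)$, which is collinear with $u$ and $v$) shows the edge bends downward under $f$ by strictly less (or more) than $w$ does, preserving the side. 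Summing over all vertices $w$ and all edges, no vertex crosses to the other side of any edge, and since crossings between two segments would require such a side-change for an endpoint of one segment relative to the other, $\Gamma_f$ is crossing-free. Finally, integrality of the grid coordinates follows from property~(\ref{lit:3}) of \cref{def:lifting}, so $\Gamma_f$ is a valid grid drawing.

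\textbf{Main obstacle.} The delicate step is the horizontal case analysis for the edge–vertex incidences: \cref{le:basic} as stated handles a point strictly inside the horizontal span of a segment, so the case $x_u < x_w < x_v$ is immediate once liftability rules out $\Delta_{uv}$, but the cases where $w$ falls outside $[x_u,x_v]$, or where $w$ shares an $x$-coordinate with $u$ or $v$ (vertical edges, handled via \cref{prp:vertical-edges} in the application), require separate slope arguments and careful handling of the non-strict versus strict inequalities coming from horizontal alignments via \cref{obs:horizontal-vertical}. I expect the bookkeeping to confirm that liftability is precisely the hypothesis needed to exclude every configuration in which convex downward-bending would reverse an incidence.
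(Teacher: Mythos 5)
Your proposal is correct and takes essentially the same route as the paper's proof: both reduce planarity and embedding preservation to the claim that no vertex switches side (above/below) with respect to any edge—justified by the fact that vertices only move upward—and both establish this side-preservation by combining \cref{le:basic} with liftability, obtaining the grid property from condition (iii) of \cref{def:lifting}. One directional slip worth fixing: $\Delta_{uv}$ lies \emph{above} the edge, so liftability is what rescues a vertex lying above an edge (it forces such a vertex to be at least as high as the edge's upper endpoint, beating the upward shift of the chord), whereas a vertex below the edge needs only \cref{le:basic} and monotonicity—your phrase ``strictly below it but outside the triangle'' attaches the triangle condition to the wrong side.
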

\begin{proof}
Condition (\ref{lit:3}) of \Cref{def:lifting} trivially implies that $\Gamma_f$ is a grid drawing. 
We next prove that $\Gamma$ and $\Gamma_f$ have the same planar embedding. (Note that, if $\Gamma$ is not liftable, $\Gamma_f$ is not necessarily planar.) 
Since, by Condition (\ref{lit:2}) of \Cref{def:lifting}, $\Gamma_f$ is obtained from $\Gamma$ by shifting vertices upwards, the existence~in $\Gamma_f$ of an edge crossing or of a vertex having a circular order of its incident edges different than the one in $\Gamma$, implies that there exist a vertex $w$ and an edge $(u,v)$ in $G$, such that $w$ is below (above)  $(u,v)$ in $\Gamma$ and above (below) $(u,v)$ in $\Gamma_f$. 

We next argue that the situation described above is not possible. Consider a vertex $w$ and an edge $(u,v)$ of $G$ and let $s$ and $s'$ be the line-segments representing $(u,v)$ in $\Gamma$ and $\Gamma_f$. Clearly, it suffices to consider the case in which $x_u \le x_w \le x_v$. Let $p$ and $p'$ be the vertical projection of $w$ on $s$ and $s'$, respectively. Also, let $y_p$ and $y_{p'}$ be the $y$-coordinates of $p$ in $\Gamma$ and of $p'$ in $\Gamma_f$, respectively. 
Suppose $y_u \leq y_v$; the case in which $y_u > y_v$ is symmetric.

Firstly, consider the case in which $w$ is above $s$, i.e., $y_w > y_{p}$. Since $\Gamma$ is liftable, vertex $w$ does not belong to $\Delta_{uv}$. Hence,  $y_u \le y_{p} \le y_v \le y_w$. Since $f$ is strictly-increasing, it follows $f(y_u) \le f(y_{p}) \le f(y_v) \le f(y_w)$. However, $f(y_w) > y_{p'}$ implies that $w$ is above $s'$, as desired. 
Secondly, consider the case in which $w$ is below $s$ in $\Gamma$. 
Here, we distinguish three cases: $x_w=x_u$, $x_w=x_v$ and $x_u < x_w < x_v$. In the first case, we have $y_w < y_{p}=y_u$ and $y_{p'}=f(y_u)$. Since $f$ is strictly-increasing, it holds $f(y_w) < f(y_u)=y_{p'}$, i.e., $w$ is below $s'$, as desired. The second case is analogous. 
For the third case, we know that $y_w < y_p$. Since $p$ lies on $s$, by \Cref{le:basic}, $f(y_p)<y_{p'}$ holds. Also, since $f$ is strictly-increasing, it follows $f(y_w) < f(y_p)$. Thus, $f(y_w)<y_{p'}$ holds, i.e., $w$ is below $s'$, as desired.
\end{proof}

\subsection{Application to Kant's drawings}\label{ssec:application}

We now show that applying a lifting function to a drawing computed by~Kant's algorithm (see \Cref{sec:kant}) yields a drawing with several important properties. 

\begin{lemma2}\label{le:kant}
Let $\Gamma$ be a drawing of a $3$-connected plane graph $G$ computed by Kant's algorithm. Drawing $\Gamma$ is liftable.
\end{lemma2}
\begin{proof}
Consider an edge $(u,v)$ of $G$ and w.l.o.g.\ assume $y_u < y_v$ in $\Gamma$. We prove that there is no vertex $w$ in $\Delta_{uv}$. This is obvious when $x_u = x_v$, 
since $\Delta_{uv} = \emptyset$. Hence, either $x_u < x_v$ or $x_u > x_v$. Consider the former case; the latter can be treated symmetrically. Suppose for a contradiction that there exists at least one vertex (other than $u$ and $v$) in $\Delta_{uv}$. Let $w$ be the rightmost vertex out of those in $\Delta_{uv}$.
Since $w$ is in $\Delta_{uv}$ we know $y_u < y_w < y_v$. Since $y_u \neq y_v$, it follows that $(u,v)$ is not the edge $(v_1,v_2)$ of $P_0$, which, in turn, implies that vertex $w$ belongs to a path $P_i$ with $i>0$, since $y_u<y_w$. Hence, by \cref{prp:b}, $w$ has at least two incident edges $(a,w)$ and $(b,w)$, such that $y_a \le y_w$, $y_b \le y_w$, and $x_a < x_w < x_b$.  Since $b$ is to the right of $w$, the way we selected $w$ implies that~$b$ does not belong to $\Delta_{uv}$; consequently, $(w,b)$ crosses $(u,v)$, contradicting the planarity of $\Gamma$. 
\end{proof}

\noindent By combining \Cref{thm:lifting-lemma,le:kant}, we conclude the following.

\begin{theorem2rep}[\fabristar]\label{thm:hor-collinearities}
Given a $3$-connected plane graph $G$ and a lifting function $f$,~let $\Gamma_f$ be the transformed drawing of a drawing $\Gamma$ of $G$ computed by Kant's algorithm. 
Then, $\Gamma_f$ is internally-convex and planar with the same embedding as $\Gamma$. Also, if two consecutive edges of an inner face of $\Gamma_f$ form an angle $\pi$ inside this face, then these edges are horizontal.
\end{theorem2rep}
\begin{proof}
By~\cref{le:kant}, $\Gamma$ is liftable. Consequently,  by~\cref{thm:lifting-lemma}, $\Gamma_f$ has the same planar embedding as $\Gamma$, thus any  face  in $\Gamma$ exists also in $\Gamma_f$ (even though with a different drawing). Consider a counterclockwise walk along the boundary of an inner face $g$ in $\Gamma$ and let $u$, $v$ and $w$ be three consecutive vertices along this walk. Let $\alpha$ and $\alpha'$ be the angle at $v$ formed by the line-segments representing edges $(u,v)$ and $(v,w)$ inside $g$ in $\Gamma$ and in $\Gamma_f$, respectively. Since $\Gamma$ is convex,  $\alpha \le \pi$. We argue that $\alpha' \le \pi$ as well, and that if $\alpha'=\pi$, then $(u,v)$ and $(v,w)$ are horizontally aligned. We distinguish whether $u$, $v$ and $w$ are collinear or not. 

\myparagraph{Case A:} \textit{Vertices $u$, $v$ and $w$ are collinear in $\Gamma$.} By \Cref{prp:vertical-edges}, vertices $u$, $v$ and $w$ are not vertically aligned in $\Gamma$. If they are horizontally aligned in $\Gamma$, then by \Cref{obs:horizontal-vertical} they are horizontally aligned also in $\Gamma_f$, thus $\alpha=\alpha'=\pi$, as desired.
Suppose now the line $\ell$ through $u$, $v$ and $w$ in $\Gamma$ is either of positive or of negative slope, which by \Cref{prp:collinear} implies that either $y_u < y_v < y_w$ or $y_u > y_v > y_w$ holds, respectively. Then, by \Cref{le:basic}, it follows that $v$ is below the line-segment connecting $u$ and $w$ in $\Gamma_f$, hence $\alpha' < \pi$, as desired. 
    
\myparagraph{Case B:} \textit{Vertices $u$, $v$ and $w$ are not collinear in $\Gamma$.} If any two of $u$, $v$ and $w$ are horizontally aligned in $\Gamma$, since $f$ is strictly-increasing, it readily follows that $\alpha' < \pi$. So assume that no horizontal alignment exists. By symmetry we can further assume that $x_u < x_w$ holds in $\Gamma$.  We distinguish subcases based on the $x$-coordinate of $v$ in~$\Gamma$. 
\begin{itemize}
    \item $x_v < x_u$. In this case vertex $v$ is to the left of both $u$ and $w$ (in both $\Gamma$ and $\Gamma_f$), thus $\alpha' < \pi$. 

    \item $x_v > x_w$. Similarly, vertex $v$ is to the right of both $u$ and $w$ (in both $\Gamma$ and $\Gamma_f$), thus $\alpha' < \pi$. 

    \item $x_u \le x_v  \le x_w$. If $v$ is above (below) $u$ and $w$ in $\Gamma$, since $f$ is strictly-increasing, $v$ is above (below) $u$ and $w$ also in $\Gamma_f$, hence $\alpha' < \pi$. Since $u,v,w$ are not collinear, by \cref{prp:collinear}, it follows that $v \notin \Delta_{uw}$. Hence, $v$ is below the line-segment connecting $u$ and $w$ in $\Gamma$. Let $s$ and $s'$ be the line-segments connecting $u$ and $w$ in $\Gamma$ and in $\Gamma_f$, respectively. Let $p$ and $p'$ be the vertical projection of of $v$ on $s$ and $s'$, respectively. Also, let $y_p$ and $y_{p'}$ be the $y$-coordinates of $p$ in $\Gamma$ and of $p'$ in $\Gamma_f$, respectively. We know that $y_v < y_p$. Since $p$ lies on $s$, by \Cref{le:basic}, $f(y_p) < y_{p'}$ holds. Also, since $f$ is strictly-increasing, it follows $f(y_v) < f(y_p)$, hence $f(y_v)<y_{p'}$ and $\alpha' < \pi$.\qedhere
\end{itemize}
\end{proof}
\begin{proof}[Proof sketch]
Since, by \cref{le:kant}, $\Gamma$ is liftable, by \cref{thm:lifting-lemma} $\Gamma_f$ has the same planar embedding as $\Gamma$. Consider a counterclockwise walk along the boundary of an inner face $g$ in $\Gamma$ and let $u$, $v$ and $w$ be three consecutive vertices along this walk. Let $\alpha$ and $\alpha'$ be the angle at $v$ formed by the edges $(u,v)$ and $(v,w)$ inside $g$ in $\Gamma$ and in $\Gamma_f$, respectively. Since $\Gamma$ is convex,  $\alpha \le \pi$. We claim that $\alpha' \le \pi$ and that if $\alpha'=\pi$, then $(u,v)$ and $(v,w)$ are horizontally aligned in~$\Gamma_f$. We prove the claim when  $u$, $v$ and $w$ are collinear in $\Gamma$ (the other case is deferred for the appendix).
By \Cref{prp:vertical-edges}, vertices $u$, $v$ and $w$ are not vertically aligned in $\Gamma$. If they are horizontally aligned in $\Gamma$, then by \Cref{obs:horizontal-vertical} they are horizontally aligned also in $\Gamma_f$, as desired.
Suppose now the line $\ell$ through $u$, $v$ and $w$ in $\Gamma$ is either of positive or of negative slope, which by \Cref{prp:collinear} implies that either $y_u < y_v < y_w$ or $y_u > y_v > y_w$ holds, respectively. Then, by \Cref{le:basic}, it follows that $v$ is below the line-segment connecting $u$ and $w$ in $\Gamma_f$, hence $\alpha' < \pi$. 
\end{proof}

\subsection{Putting everything together}
\label{ssec:all}

We are now ready to put all pieces together. Let $G$ be a $3$-connected plane graph with $n$ vertices. Without loss of generality, we can assume that the outer face of $G$ contains at most $5$ vertices (which will be useful in the next subsection), since such a face always exists. Let $\Gamma$ be a convex drawing of  $G$ computed by Kant's algorithm and let $f:\mathbb{R} \rightarrow \mathbb{R}$ be the function  $f(y) =  5(n-2)^2 y + y^2$. Clearly, $f$ is a lifting function. Hence, by \cref{thm:hor-collinearities},  in the transformed drawing $\Gamma_f$, each inner face $g$  that is not strictly-convex contains at least three horizontally-aligned vertices. By \cref{prp:collinear}, these vertices are part of a chain in the canonical order $\delta$. Hence, by Condition~(\ref{co:4}) of canonical order, each of the vertices of this chain has at least one neighbor placed above it. By definition of $f$ it follows that each of these neighbors is positioned at least $5(n-2)^2$ units above the chain in $\Gamma_f$. We exploit this property to turn $\Gamma_f$ into an internally strictly-convex drawing by shifting all vertices of each chain upwards while keeping $\Gamma_f$ planar.

To this end, let $P_k=\{z_1,\ldots,z_p\}$ with $p \geq 2$ be a chain in the canonical order $\delta$ used to construct $\Gamma$. For $i=1,\ldots,p$, we shift vertex $z_i$ of $P_k$ by $(x_{z_i}-x_{z_1})(x_{z_p}-x_{z_i})$ units upwards; see~\cref{fig:chains-up}. It follows that if $\lambda$ is the total width of $P_k$ in $\Gamma$ (and thus also in $\Gamma_f$), then each vertex in $P_k$ is shifted by at most $\lambda^2/4$ units of length upwards, which is in turn at most $(n-2)^2$, since the total width of $\Gamma$ is at most $2n-4$, and therefore $\lambda \leq 2n-4$. Also, note that only the \emph{internal vertices} of $P_k$ are shifted (if any), i.e., only the vertices $z_i$ with $2 \le i \le p-1$. Let $\widehat{\Gamma_f}$ be the drawing obtained from $\Gamma_f$ by applying the aforementioned procedure to each chain, which we call the \emph{curved drawing} of $\Gamma_f$.
Clearly,  $\widehat{\Gamma_f}$ is a grid drawing, we can prove   that it is planar and internally strictly-convex. 

\begin{figure}[t]
	\centering
	\begin{subfigure}[b]{.32\textwidth}
	\flushleft
	\includegraphics[page=5,width=\textwidth]{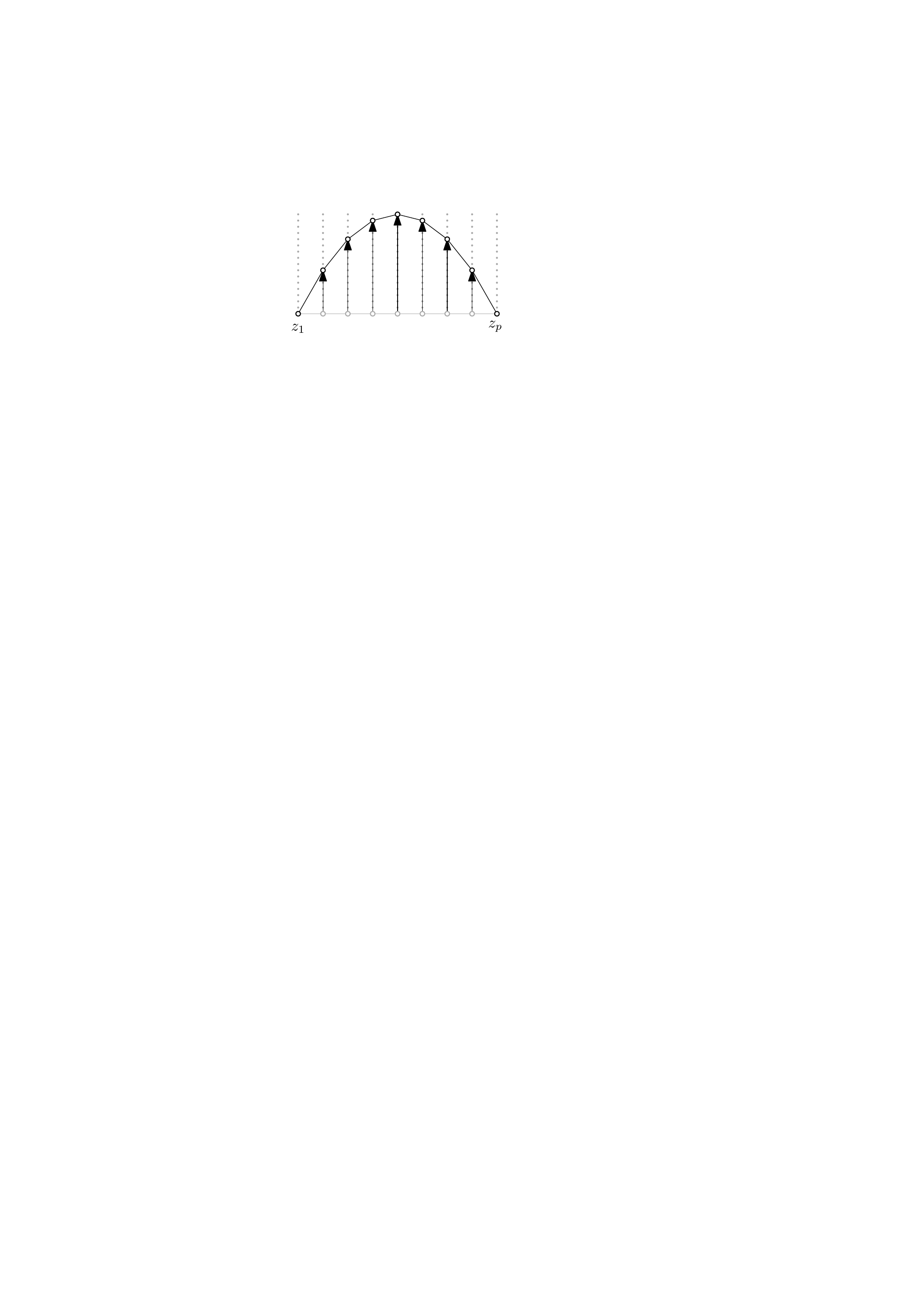}
	\subcaption{}
	\label{fig:chains-up}
	\end{subfigure}
	\hfil
	\begin{subfigure}[b]{.32\textwidth}
	\flushleft
	\includegraphics[page=2,width=\textwidth]{figs/chain.pdf}
	\subcaption{}
	\label{fig:chain-a}
	\end{subfigure}
	\hfil
	\begin{subfigure}[b]{.32\textwidth}
	\flushright
	\includegraphics[page=3,width=\textwidth]{figs/chain.pdf}
	\subcaption{}
	\label{fig:chain-b}
	\end{subfigure}
	\caption{(a)~Illustration of the procedure of shifting the vertices of a chain upwards, and (b-c)~cases that arise in the proof of \Cref{thm:internally-convex}.}
	\label{fig:chain}
\end{figure}

\begin{toappendix}
\medskip The next lemma concludes the proof of \Cref{thm:internally-convex}.

\begin{lemma2}\label{lem:still-planar}
Given a $3$-connected $n$-vertex plane graph $G$ and the lifting function $f:\mathbb{R} \mapsto \mathbb{R}$ with $f(y) =  5(n-2)^2 y + y^2$,~let $\Gamma_f$ be the transformed drawing of a drawing $\Gamma$ of~$G$ computed by Kant's algorithm. Then, the curved drawing $\widehat{\Gamma_f}$ of $\Gamma_f$ is planar with the same embedding as $\Gamma$.
\end{lemma2}
\begin{proof}
Consider the canonical order $\delta$ of $G$ used to compute $\Gamma$. 
Let $P_k=\{z_1,\ldots,z_p\}$ be a chain of $\delta$ with $p \ge 3$, i.e., with at least one internal vertex. Consider a vertex $z_i$, with $2 \le i \le p-1$ (recall that $z_1$ and $z_p$ are not shifted when going from $\Gamma_f$ to $\widehat{\Gamma_f}$). Let $u_1 = z_{i-1}, u_2,  \dots, u_{h-1}, z_{i+1}=u_h$ be the $h$ neighbors of $z_{i}$, for some $h \ge 3$, ordered such that $(z_i,u_j)$ precedes $(z_i,u_{j+1})$, for each $1 \le j < h$, in clockwise order around $z_i$ starting at $(u_1,z_i)$. Refer to \Cref{fig:chain-a} for an illustration. We make two observations:
\begin{inparaenum}[(i)]
\item Each $u_j$, with $2 \le j \le h-1$, belongs to a distinct path of $\delta$. Let $l(u_j)$ be an index such that $u_j \in P_{l(u_j)}$. 
\item By the edge coloring described in \Cref{sec:kant}, we know that there exist two (possibly coinciding) indexes $2 \le j_1 \le j_2 \le h-1$, such that $k=l(u_1)<l(u_2)< \dots < l(u_{j_1})$ and $k=l(u_h)<l(u_{h-1})< \dots < l(u_{j_2})$. 
\end{inparaenum}
To see this, observe that each edge $(z_i,u_j)$ with $j < j_1$ is green, each edge $(z_i,u_j)$ with $j > j_2$ is blue, and either $(z_i,u_{j_1})$ is green and  $(z_i,u_{j_2})$ is blue if $j_1 < j_2$, or $(z_i,u_{j_1})$ is red if $j_1 = j_2$. Consequently, $u_1,\dots,u_{j_1-1}$ are all to the left of $z_i$ and hence we call them the \emph{left neighbors} of $z_i$, while  $u_{j_2+1},\dots,u_h$ are all to the right of $z_i$ and hence we call them the \emph{right neighbors} of $z_i$. Also, $u_{j_1}$ and $u_{j_2}$ are to the left and to the right of $z_i$, respectively, if $j_1 < j_2$, while $u_{j_1}$ can be in any position with respect to $z_i$ (including vertically aligned) if $j_1=j_2$. In any case, since $k=l(u_1)<l(u_2)< \dots < l(u_{j_1})$ and $k=l(u_h)<l(u_{h-1})< \dots < l(u_{j_2})$, we know that, in $\Gamma_f$, $y_{u_1} < y_{u_2} < \dots < y_{u_{j_1}}$ and $y_{u_h}<y_{u_{h-1}}< \dots <u_{j_2}$. Also, none of the vertices $u_i$ with $2 \le i \le h-1$ is an internal vertex of a chain (since it is adjacent to $z_i$ which belongs to a previous path in $\delta$), and hence it has not been shifted when going from $\Gamma_f$ to $\widehat{\Gamma_f}$. As a consequence, when going from $\Gamma_f$ to $\widehat{\Gamma_f}$, the edges whose drawing changes are either edges of a chain or edges having exactly one end-vertex that is part of a chain. Therefore, it suffices to prove that updating the position of the vertices of a single chain $P_k$ preserves planarity and does not change the planar embedding, because the updates of distinct chains are independent. 

In light of the above discussion, we prove that updating the position of the vertices of a chain $P_k=\{z_1,\ldots,z_p\}$ does not introduce edge crossings and actually does not change the planar embedding. If $p=2$, no vertex of $P_k$ is moved and hence the claim trivially holds. Hence assume $p>2$. We perform the update one vertex by one from $z_1$ to $z_{p-1}$ (vertex $z_p$ is not moved). The proof is by induction on the number $i$ of updated vertices. In the base case $i=1$ and there is nothing to prove since $z_1$ is not moved. Suppose now $i>0$. Before moving $z_i$ the current drawing is planar and maintains the original planar embedding by induction. We adopt the same notation as in the previous paragraph. Observe that $z_{i-1}$ has been shifted up by at most $(n-2)^2$ units while vertices $u_2$ and $u_{h-1}$ are at least $5(n-2)^2$ units above $z_i$. Consequently, in the current drawing it holds $y_{u_1} < y_{u_2} < \dots < y_{u_{j_1}}$ and $y_{u_h}<y_{u_{h-1}}< \dots <u_{j_2}$. Also, the polygon bounded by vertices $z_i$ and $u_j$ with $1 \le j \le h$ is empty; such a polygon is shaded in \Cref{fig:chain-b}. Indeed the presence of a vertex not adjacent to $z_j$ inside this polygon would contradict \Cref{prp:b}. Now, since we are only shifting up $z_i$, in order to create a crossing between an edge $(z_i,u_j)$ and another edge or to change the circular order of the edges at $u_j$, for some $1 \le j \le h$, it must be that $z_i$ is moved above $u_2$ or $u_{h-1}$. This is not possible since we have already observed that  $u_2$ and $u_{h-1}$ are at least $5(n-2)^2$ units above $z_i$.
\end{proof}
\end{toappendix}

\begin{theorem}[\fabristar]\label{thm:internally-convex}%
Given a $3$-connected $n$-vertex plane graph $G$ and the lifting function $f:\mathbb{R} \mapsto \mathbb{R}$ with $f(y) =  5(n-2)^2 y + y^2$,~let $\Gamma_f$ be the transformed drawing of a drawing $\Gamma$ of~$G$ computed by Kant's algorithm. Then, the curved drawing~$\widehat{\Gamma_f}$ of $\Gamma_f$ is an internally strictly-convex grid drawing of $G$ with the same planar embedding as $\Gamma$.
\end{theorem}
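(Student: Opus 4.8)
The plan is to prove the three claimed properties separately, with strict convexity being the crux. That $\widehat{\Gamma_f}$ is a grid drawing is immediate, since each internal chain vertex $z_i$ is shifted upward by the integer amount $(x_{z_i}-x_{z_1})(x_{z_p}-x_{z_i})$. Planarity together with preservation of the embedding is exactly the content of \Cref{lem:still-planar}, which I would invoke directly. It thus remains to show that every inner face of $\widehat{\Gamma_f}$ is strictly convex.

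The starting point is \Cref{thm:hor-collinearities}: $\Gamma_f$ is internally convex, and the only interior angles equal to $\pi$ occur where three consecutive boundary vertices are horizontally aligned. By \Cref{prp:collinear} such vertices are internal vertices of a chain $P_k=\{z_1,\dots,z_p\}$, and since by \Cref{prp:face} each face has at most one horizontal black path, all these flat angles lie in the single face $g_k$ sitting directly below $P_k$, at its internal vertices $z_2,\dots,z_{p-1}$. Hence strict convexity reduces to two tasks: (a) converting every such flat angle into a strict one, and (b) certifying that the shifts create no new angle $\ge\pi$ at any other vertex or face. For (a) I would observe that $(x-x_{z_1})(x_{z_p}-x)$ is a strictly concave function of $x$, so the lifted images of $z_1,\dots,z_p$ all lie on a strictly concave arc; consequently each internal $z_i$ is strictly above the chord joining the images of $z_{i-1}$ and $z_{i+1}$ (whether or not these are themselves shifted), so the angle at $z_i$ inside $g_k$ becomes strictly less than $\pi$.

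For (b) the key quantitative tool is the separation built into $f(y)=5(n-2)^2y+y^2$: every shift is at most $(n-2)^2$ (the chain width is at most $2n-4$), whereas $f$ stretches any unit vertical gap to at least $5(n-2)^2$. Using the neighbourhood notation of \Cref{lem:still-planar}, namely $u_1=z_{i-1},u_2,\dots,u_{h-1},u_h=z_{i+1}$ around a shifted $z_i$, together with the emptiness of its star polygon (which follows from \Cref{prp:b} and the preserved embedding), I would argue as follows. In the faces above $P_k$ the intermediate neighbours $u_2,\dots,u_{h-1}$ lie at least $5(n-2)^2$ above $z_i$, so a vertical displacement of at most $(n-2)^2$ cannot rotate the edges $(z_i,u_j)$ enough to flatten a previously strict angle, and the same estimate handles the angles at the $u_j$ whose incident edge to $z_i$ merely tilts by a bounded amount. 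The delicate subcase is at the endpoints $z_1,z_p$, which are not shifted but acquire steep chain edges: at $z_1$ the lift of $z_2$ is at most $2n-5$, while after applying $f$ the edge $(w_\ell,z_1)$ entering $z_1$ from below has slope at least $\tfrac{5}{2}(n-2)$, so $z_2$ remains strictly below the extension of $(w_\ell,z_1)$ and the angle at $z_1$ inside $g_k$ stays strictly less than $\pi$; the argument at $z_p$ is symmetric.

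I expect this endpoint analysis to be the main obstacle, as it is precisely where the constant $5$ in $f$ is needed: one must compare the worst-case lift of a chain-adjacent vertex, at most $2n-5$, against the worst-case flatness of the incoming slope edge, at least $\tfrac{5}{2}(n-2)$, and only the $f$-induced stretching of vertical distances guarantees the strict inequality $2n-5<\tfrac{5}{2}(n-2)$. All remaining angles are either unaffected by any shift or are governed by the same separation estimate, so once the endpoint case is settled the strict convexity of every inner face follows.
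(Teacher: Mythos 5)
Your proposal follows essentially the same route as the paper's proof: grid-ness is immediate from the integrality of the shifts, planarity and embedding-preservation are delegated to \cref{lem:still-planar}, the flat angles of $\Gamma_f$ are localized (via \cref{thm:hor-collinearities}, \cref{prp:collinear} and \cref{prp:face}) to the internal vertices of a chain inside the face directly below it, the strict concavity of $x \mapsto (x-x_{z_1})(x_{z_p}-x)$ repairs those angles, and the remaining endpoint angles at $z_1,z_p$ are settled by a slope comparison powered by the $5(n-2)^2$ vertical stretching of $f$. This is the paper's decomposition step for step.

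There is, however, one genuine gap, and it sits exactly where you yourself predicted the main obstacle. Your endpoint estimate uses the slope of $(w_\ell,z_1)$ \emph{in $\Gamma_f$}, namely at least $\frac{5(n-2)^2}{2n-4}=\frac{5}{2}(n-2)$, but the comparison must be carried out in $\widehat{\Gamma_f}$, and $w_\ell$ need not be a fixed point of the second shift: when the face below $P_k$ lacks boundary part~(\ref{f:1}) of \cref{prp:face} but contains part~(\ref{f:2}), $w_\ell$ is itself an internal vertex of an earlier chain and is lifted by up to $(n-2)^2$, lowering the slope of $(w_\ell,z_1)$ to as little as $\frac{5(n-2)^2-(n-2)^2}{2(n-2)-1}$, which is roughly $2(n-2)$ rather than $\frac{5}{2}(n-2)$. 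The paper subtracts this shift explicitly, and the inequality then survives only with margin $\frac{4(n-2)^2}{2(n-2)-1} > 2(n-2)-1 \iff 2(n-2) > 2(n-2)-1$; so your stated safety margin of order $n/2$ is illusory — the true margin is a single unit of slope, which is precisely why the constant $5$ in $f$ cannot be reduced by your accounting. (A smaller slip in the same sentence: the lift of $z_2$ can be as large as $(n-2)^2$; what is bounded by $2n-5$ is the \emph{slope} of $(z_1,z_2)$, which equals $x_{z_p}-x_{z_2}$.) Since your own separation estimate already supplies the missing $(n-2)^2$ correction, the gap is repairable entirely within your framework, but as written the endpoint step is unjustified in exactly the tight case.
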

\begin{proof}[Proof sketch]
Concerning the planarity of $\widehat{\Gamma_f}$, consider any internal vertex of a chain and its neighbors. At high-level, we have that the envelope through such neighbors is a polygon whose boundary is formed by a left and a right path that are $y$-monotone (see \Cref{fig:chain-a}). Since the vertex remains below its lowest successor in the canonical order (see \Cref{fig:chain-b}), no edge crossing can be introduced. Thus, $\widehat{\Gamma_f}$ is planar with the same embedding as $\Gamma_f$ (and thus as $\Gamma$, by \cref{thm:hor-collinearities}). 

We next prove that $\widehat{\Gamma_f}$ is internally strictly-convex. Let $g$ be an inner face in $\widehat{\Gamma_f}$. Recall that only internal chain-vertices are shifted in the transition from $\Gamma_f$ to $\widehat{\Gamma_f}$. Hence, if $g$ does not contain internal chain-vertices, then it is strictly-convex by \cref{thm:hor-collinearities}. Thus, we may assume that $g$ contains at least one such vertex $z$. By \cref{prp:face}, $z$ is either in the topmost chain of black edges of $g$, or it is one of the (at most two) bottommost vertices of $g$. Consider the former, as the latter is similar. Let $P_k=\{z_1,\ldots,z_p\}$ be the chain containing $z$. We argue that any angle inside $g$ incident to these vertices is smaller than $\pi$. By construction, this is the case for vertices $z_2,\ldots,z_{p-1}$. Hence, it remains to consider the angles at $z_1$ and $z_p$. Since the two cases are symmetric, consider the angle at $z_1$. Let $w_\ell$ be the neighbor of $z_1$ along $C_{k-1}$, i.e., $w_\ell$ is the vertex preceding $z_1$ in a clockwise walk of $g$ starting from $z_1$. We will prove that the slope of $(w_\ell,z_1)$ is strictly greater than the one of $(z_1,z_2)$, hence the angle at $z_1$ is less than $\pi$. 

\begin{figure}[t]
    \centering
    \includegraphics{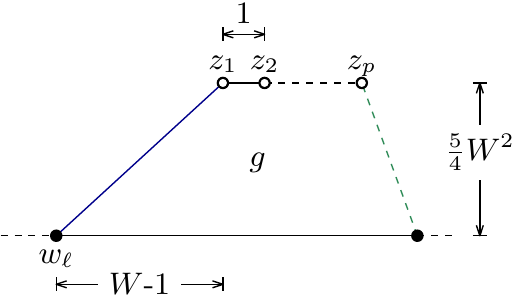}
    \caption{Illustration for the proof of \cref{thm:internally-convex}; $W$ denotes the width of $\Gamma_f$.}
    \label{fig:chain-lifting}
\end{figure}

Refer to \cref{fig:chain-lifting}. By the way the vertices of $P_k$ are shifted in the transition from $\Gamma_f$ to $\widehat{\Gamma_f}$, it follows that the maximum of the slope of $(z_1,z_2)$ is $2(n-2)-1$ (i.e., achieved when $P_k$ is of maximum $x$-length in $\widehat{\Gamma_f}$ and the $x$-distance of $z_1$ and $z_2$ in $\widehat{\Gamma_f}$ is $1$). We next argue for the slope of the edge $(w_\ell,z_1)$. Recall that vertex $z_1$ is not an interior vertex of $P_k$, which implies that it has not been shifted in the transition from $\Gamma_f$ to $\widehat{\Gamma_f}$. The same, however, does not necessarily hold for $w_\ell$. As a matter of fact, this vertex may be part of a chain, i.e., when $g$ does not contain boundary part~(\ref{f:1}) but boundary part~(\ref{f:2}) of \cref{prp:face}. This implies that it may have been shifted upwards by at most $(n-2)^2$ units in the transition from $\Gamma_f$ to $\widehat{\Gamma_f}$. The minimum of the slope of $(w_\ell,z_1)$ in $\Gamma_f$ is achieved, when $(w_\ell,z_1)$ is of maximum $x$-length in $\Gamma_f$ and of minimum $y$-length. Since the former is at most $2(n-2)-1$, while the latter is at least $5(n-2)^2$, it follows that the minimum of the slope of $(w_\ell,z_1)$ is potentially $\frac{5(n-2)^2}{2(n-2)-1}$ in $\Gamma_f$. Since in the transition from $\Gamma_f$ to $\widehat{\Gamma_f}$ vertex $w_\ell$ may be shifted by at most $(n-2)^2$ units, it follows that the slope of $(w_\ell,z_1)$ may reduce further to $\frac{5(n-2)^2-(n-2)^2}{2(n-2)-1}$, which is its minimum value. Therefore, the slope of the edge $(w_\ell,z_1)$ is strictly greater than the one of $(z_1,z_2)$,  since the following trivially holds: 
\[
\frac{5(n-2)^2-(n-2)^2}{2(n-2)-1} > 2(n-2)-1 \iff 2(n-2) > 2(n-2)-1.\qedhere
\]
\end{proof}

\subsection{Outer Face and Final Analysis}
\label{ssec:outerface}

To complete the description of our algorithm, it remains to guarantee that the outer face of the computed drawings is strictly-convex. To this aim, we slightly augment the input graph $G$ and suitably choose the canonical order to give as input to Kant's algorithm. 
Consider a planar embedding of $G$ and let $v_1,v_2,\dots,v_h$ be the vertices on the outer face (see \Cref{fi:outer-a}); recall that we have assumed $h \le 5$. If $h=3$, then  the boundary of the outer face is a triangle and hence strictly-convex. So, assume $4 \le h \le 5$. To ease the presentation, we let $h=5$  (see \Cref{fi:outer-a,fi:outer-b}), as the case  $h=4$ is simpler (see \Cref{fi:outer-c,fi:outer-d}). We proceed by adding two vertices $v_1^\star$ and $v_2^\star$ in the outer face of $G$ and edges $(v_1^\star,v_2^\star)$, $(v_1^\star,v_5)$, $(v_1^\star,v_1)$, $(v_1^\star,v_2)$, $(v_2^\star,v_3)$, $(v_2^\star,v_4)$, and $(v_2^\star,v_5)$. The resulting graph $G^\star$ 
is still planar and $3$-connected. In particular, its outer face is a $3$-cycle formed by $v_1^\star,v_2^\star,v_5$. 
We compute a canonical order $\delta^\star$ of $G^\star$ with $P_0=(v_1^\star,v_2^\star)$ and $P_m=\{v_5\}$. The key observation is that the second set of $\delta^\star$ is the chain $P_1=\{v_2,v_3\}$, since it forms the inner face $f^\star$ of $G^\star$ with $(v_1^\star,v_2^\star)$ on its boundary. 

Next, we apply the algorithm supporting \Cref{thm:internally-convex} to $G^\star$ using the aforementioned canonical order $\delta^\star$ and obtain a drawing of it that is internally strictly-convex. We next prove that the removal of $v_1^\star$ and $v_2^\star$ from this drawing yields a drawing of $G$ that is strictly-convex. By \Cref{thm:internally-convex} and by our augmentation, it suffices to guarantee that the outer face of the obtained drawing is strictly-convex. Consider first the inner angle at $v_5$ of the polygon bounding the outer face; this angle is strictly less than $\pi$, because $v_5$ is the topmost vertex of the drawing (and no other vertex is horizontally aligned with it). A similar argument applies for the angles at $v_1$ and $v_4$; in particular, after the removal of $v_1^\star$ and $v_2^\star$, vertices $v_1$ and $v_4$ are the leftmost and the rightmost neighbors of $v_5$, respectively, and therefore they are the leftmost and the rightmost vertices in the drawing, respectively. Concerning $v_2$ and $v_3$, they are horizontally aligned and, after the removal of $v^\star_1$ and $v^\star_2$, they are the bottommost vertices of the drawing. Thus, their angles are also strictly less that $\pi$ completing the proof of our claim. To conclude the proof of \cref{thm:main}, it remains to discuss the area required by the drawing obtained as above and the time complexity to compute~it.

\begin{figure}[t]
\centering
\begin{subfigure}{.24\textwidth}
\flushleft
\includegraphics[page=1,width=\textwidth]{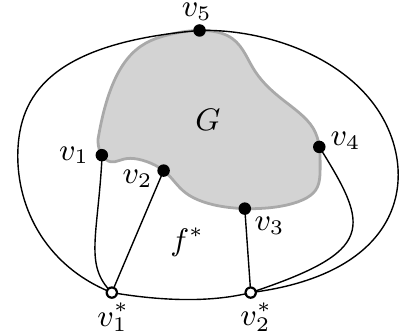}
\subcaption{}
\label{fi:outer-a}
\end{subfigure}
\begin{subfigure}{.24\textwidth}
\flushleft
\includegraphics[page=2,width=\textwidth]{figs/outer_new}
\subcaption{}
\label{fi:outer-b}
\end{subfigure}
\begin{subfigure}{.24\textwidth}
\flushleft
\includegraphics[page=3,width=\textwidth]{figs/outer_new}
\subcaption{}
\label{fi:outer-c}
\end{subfigure}
\begin{subfigure}{.24\textwidth}
\flushleft
\includegraphics[page=4,width=\textwidth]{figs/outer_new}
\subcaption{}
\label{fi:outer-d}
\end{subfigure}
\caption{Treating the outer face when its degree is five (a--b) and four (c--d).\label{fig:outerface}}
\end{figure}

\myparagraph{Area bound.} 
The drawing $\Gamma$ computed by Kant's algorithm for $G^\star$ fits on an integer grid of size $(2n^\star-4) \times (n^\star-2)$, where $n^\star=n+2$ ($G^\star$ has two more vertices than $G$). 
The transformed drawing $\Gamma_f$ of $\Gamma$ by means of the lifting function $f:\mathbb{R} \mapsto \mathbb{R}$ with $f(y) =  5(n^\star-2)^2 y + y^2$ has the same width as $\Gamma$, while the vertices $v^\star_1$ and $v^\star_2$ have $y$-coordinate $0$ in $\Gamma_f$. On the other hand, vertex $v_5$ has $y$-coordinate $ 5(n^\star-2)^2(n^\star-2)+(n^\star-2)^2$, which is also the height of $\Gamma_f$. 
Since no vertex of the outer face of $\Gamma_f$ is further shifted upwards,  the curved drawing $\widehat{\Gamma_f}$ of $\Gamma_f$ has the same width and height as $\Gamma_f$. 
After removing $v_1^\star$ and $v_2^\star$, the width of the final drawing of $G$ is at least two units less than the one of $\widehat{\Gamma_f}$, while its height is at least $5(n^\star-2)^2$ units less. Since $n^\star=n+2$, the final drawing lies on a grid of size $((2(n+2)-4)-2) \times (5((n+2)-2)^3- 4((n+2)-2)^2) = 2(n-1) \times (5n^3-4n^2)$.  

\myparagraph{Time complexity.} 
Each step of our algorithm can be implemented in $O(n)$~time: 
\begin{inparaenum}[(i)]
\item finding a planar embedding of $G$ with a face of degree at most $5$, 
\item computing $G^\star$, a canonical order of it, and applying Kant's algorithm to $G^\star$, 
\item computing the transformed drawing with respect to our lifting function $f$ and updating the position of the internal chain-vertices.
\end{inparaenum}
This completes the proof of \Cref{thm:main}.

\section{Conclusions and Open Problems}\label{sec:conclusions}
We have provided a linear-time algorithm that computes a strictly-convex drawing of a $3$-connected planar graph on an integer grid of size $2(n-1) \times (5n^3-4n^2)$. Compared to the previously best-known upper bound for such drawings~\cite{Barany2006}, we largely improve the multiplicative constants by means of an arguably simpler algorithm, which therefore has the potential to be of practical use. Along the way, we proved tools that can be of independent interest (see in particular \Cref{thm:lifting-lemma}). Some problems that stem from our research are the following:

\begin{itemize}
    \item Can we achieve a similar area bound together with a constant aspect ratio?
    \item Is $\Omega(n^4)$ a lower bound for the area requirement of strictly-convex drawings?
    \item Can we compute strictly-convex drawings in small area with good edge-vertex resolution~\cite{DBLP:journals/comgeo/BekosGMPST21,DBLP:conf/iwoca/BekosGMS22}?
\end{itemize}

\bibliographystyle{splncs03}
\bibliography{references}
\end{document}